\newtheoremstyle{mythm}{1.5ex plus 1ex minus .2ex}{1.5ex plus 1ex minus .2ex}
    {\it}{}{\bf}{}{0.5em}{}
\theoremstyle{mythm}
\newtheorem{thm}{Theorem}
\newtheorem{ex}{Example}
\numberwithin{thm}{section}
\numberwithin{rem}{section}
\numberwithin{ex}{section}
\def\beq{\begin{equation}}
\def\eeq{\end{equation}}
\titleformat{\section}[hang]{\normalsize\bfseries}{\thetitle\quad}{0.0ex}{}
\titleformat{\subsection}[hang]{\sl\normalsize}{\thetitle\quad}{0.0ex}{}
\titlespacing*{\section} {0pt}{10pt}{4pt}
\begin{document}

\begin{center}{\Large\bf  Jackknife empirical likelihood with complex surveys\footnotetext{Supported by}}
\vskip 4mm

{\bf Mengdong Shang, Xia Chen\footnote{Corresponding author. E-mail: xchen80@snnu.edu.cn}}

\small School of Mathematics and Statistics, Shaanxi Normal University, Xi'an 710119, China

\end{center}
\vskip 4mm

\noindent{\bf Abstract}\quad
We propose the so-called jackknife empirical likelihood approach for the survey data of general unequal probability sampling designs, and analyze parameters defined according to U-statistics. We prove theoretically that jackknife pseudo-empirical likelihood ratio statistic is asymptotically distributed as a chi-square random variable, and can be used to construct confidence intervals for complex survey samples. In the process of research, we consider with or without auxiliary information, utilizing design weights or calibration weights. Simulation studies are included to examine that in terms of coverage probability and tail error rates, the jackknife pseudo-empirical likelihood ratio confidence intervals are superior to those based on the normal approximation.

\noindent{\bf Keywords}\quad
Unequal probability sampling; U-statistics; Auxiliary information; Design weight; Calibration weight.

\noindent{\bf 2000 MR Subject Classification}\quad 62F12, 62G05

\section{Introduction}
With the development of society and economy, complex survey has become a powerful approach for collecting data in many fields of scientific investigation. For complex survey, a very significant development in recent years is the use of empirical likelihood method, which does not require any distribution assumptions. Owen (1988,1990) \cite{Owen88, Owen90} first introduced empirical likelihood for independent data, as a way to construct confidence regions for nonparametric and semiparametric inference. It preserves two key properties of parametric likelihood: Wilks' theorem and Bartlett correction, which aid empirical likelihood extend many applications.

In surveys, Hartley and Rao (1968) \cite{Hartley68} first uesd the concept of empirical likelihood under the name "scale-load" approach, and Chen and Qin (1993) \cite{Chen93} first formally applied this method for estimating the population mean under simple random sampling. Chen and Sitter (1999) \cite{Chen99} considered the pseudo-empirical likelihood method of utlizing auxiliary information in complex surveys for general unequal probability sampling design. They proved that if known auxiliary variables are used to estimate the population mean, the proposed method is asymptotically equivalent to the generalized regression estimation method. Wu and Rao (2006) \cite{Wu06} presented pseudo-empirical likelihood ratio confidence intervals for a single parameter under arbitrary sampling design. For the general unistage unequal probability sampling design, Rao and Wu (2010) \cite{Rao10} proved that the pseudo-empirical likelihood method can be used to construct asymptotically valid Bayesian pseudo-empirical likelihood intervals, and is very flexible in the use of auxiliary population information.

In order to extend empirical likelihood to nonlinear statistics, Jing et al.(2009) \cite{Jing09} established the jackknife empirical likelihood for scalar parameters of single-sample and two-sample U-statistic estimation, and proved that the performance is superior to that of the scalar empirical likelihood method proposed by Qin and Zhou (2006) \cite{Qin06}. Li et al.(2016) \cite{Li16} extended the jackknife empirical likelihood method to apply to both vector parameters and nonsmooth estimation equation.

In this article, our primary aim is to construct the jackknife pseudo-empirical likelihood method for the parameters defined by the U-statistic under the general unistage unequal probability sampling design. In Section 2, we respectively discuss the jackknife empirical likelihood method with basic design weights and with calibration weights. In Section 3, we examine the asymptotic properties of jackknife pseudo-empirical log-likelihood ratio functions without or with auxiliary population information. Results of some simulation studies are reported in Section 4. An application to the data set from the 2019 China Household Finance Survey (CHFS) is presented in Section 5. Section 6 contains some conclusion. Proofs are given in Appendix.

\section{Methodology}
For verifying asymptotic theory, we assume that a sequence of finite populations indexed by $v$, is composed of $N_{v}$ units. Exploiting the probability sampling method, we obtain a survey smaple of size $n_{v}$ from population. The population size $N_{v}$ and the sample size $n_{v}$ both incline to $\infty$ as $v \rightarrow \infty $. To simplify the notation, we omit the index $v$ and the limiting process $v \rightarrow \infty $ is replaced by $N \rightarrow \infty $ and $n \rightarrow \infty $.

The population of the $i$th unit is related to the value $y_{i}$ of the variable $y$ and the value $\mathbf{x}_{i}$ of the vector of auxiliary variables $\mathbf{x}$. And we suppose that the population mean vector $\bar{\mathbf{X}} = N^{-1} \sum_{i=1}^{N} \mathbf{x}_{i}$ is known. Let $s$ be the set of $n$ units contained in the sample, and $\pi_{i} = P \left ( i \in s \right ), i=1, \ldots ,N$, be the sample inclusion probabilities.

We use $\theta$ to represent a finite population parameter of interest. Let the $U-$statistic
\beq
T_{n}=T\left ( y_{1},\ldots ,y_{n} \right )=\begin{pmatrix}n\\ m\end{pmatrix}^{-1}\sum_{1\leq i_{1}< \cdots <i_{m}\leq n}h\left ( y_{i_{1}},\ldots,y_{i_{m}} \right ) \nonumber
\eeq
be an unbiased estimator of $\theta$ and $h \left ( \cdot \right)$ be a kernel function. Define the jackknife pseudo-value,
\beq
\hat{V}_{i}=nT_{n}-\left ( n-1 \right )T_{n-1}^{\left ( -i \right )} \nonumber
\eeq
as a new estimator of $\theta$ related to the $i$th unit, with $T_{n-1}^{\left ( -i \right )}=T\left ( y_{1},\ldots ,y_{i-1},y_{i+1},\ldots ,y_{n} \right )$.

\subsection{Jackknife empirical likelihood with basic design weights}
Let us begin by introducing unistage sampling designs with fixed sample size $n$, where $d_{i} = 1/ \pi_{i}, i \in s$ are the basic design weights and $\tilde{d}_{i} \left ( s \right ) = d_{i} / \sum_{i \in s} d_{i}$ are referred to as the normalized design weights for the given sample $s$.

Chen and Sitter (1999) first proposed the pseudo-EL method for complex survey data, and Wu and Rao (2006) defined the profile pseudo-empirical log-likelihood function (PELL) for $\theta$. We now generalize the PELL to the jackknife pseudo-value, which
acquire
\beq \label{JEL1}
l_{JEL} \left ( \theta \right ) = n^{\ast} \sum_{i \in s} \tilde{d}_{i} \left ( s \right ) log \left \{ p_{i} \left ( \theta \right ) \right \}
\eeq
and lend it much boader applicability. The pseudo-MELE of the parameter $\theta$ is given in terms of the Hajek estimator $\hat{V}_{H} = \sum_{i \in s} \hat{p}_{i} \hat{V}_{i} = \sum_{i \in s} \tilde{d}_{i} \left ( s \right ) \hat{V}_{i}$.

With the absence of auxiliary population information in mind, the $n^{\ast}= n / deff_{H}$ is the effective sample size, where the design effect $deff_{H}$ based on the Hajek estimator $\hat{V}_{H}$ is defined as
\beq \label{deffH}
deff_{H} = V_{p} \left ( \hat{V}_{H} \right ) / \left ( S_{V}^{2} / n \right ).
\eeq
The $V_{p} \left ( \cdot \right )$ denotes the variance under the given design and the $S_{V}^{2} / n$ is the variance of $\hat{V}_{H}$ under simple random sampling. For a fixed $\theta$, maximizing (\ref{JEL1}) subject to $p_{i}>0$, \ $\sum_{i \in s}p_{i}=1$ and 
\beq \label{Con1}
\sum_{i \in s}p_{i}\hat{V_{i}}=\theta,
\eeq
leads to $\hat{p}_{i} \left ( \theta \right ) = \tilde{d}_{i} \left ( s \right ) / \left \{ 1 + \lambda \left ( \hat{V}_{i} - \bar{V} \right ) \right \}, \bar{V} = \frac {1} {n} \sum_{i \in s} \hat{V}_{i}$, where the Lagrange multipiler $\lambda$ is the solution to 
\beq \label{Sol1}
\sum_{i \in s} \frac{\tilde{d}_{i} \left ( s \right ) \left ( \hat{V}_{i} - \bar{V} \right )}{1 + \lambda \left ( \hat{V}_{i} - \bar{V} \right )} = 0.
\eeq

We now consider the auxiliary population information by introducing the additional constraint 
\beq \label{Con2}
\sum_{i \in s}p_{i}\mathbf{x}_{i}= \bar{\mathbf{X}}.
\eeq
It follows from Wu and Rao (2006) that the GREG estimator of $\bar{V}$ is given by $\hat{V}_{GR} = \hat{V}_{H} + {B}' \left ( \bar{\mathbf{X}} - \hat{\bar{\mathbf{X}}}_{H} \right )$, where $\hat{\bar{\mathbf{X}}}_{H} = \sum_{i \in s} \tilde{d}_{i} \left ( s \right ) \mathbf{x}_{i}$ is the Hajek estimator and
\beq \label{B}
B = \left \{ \frac{1}{N} \sum_{i=1}^{N} \left ( \mathbf{x}_{i} - \hat{\bar{\mathbf{X}}} \right ) {\left ( \mathbf{x}_{i} - \hat{\bar{\mathbf{X}}} \right ) }' \right \}^{-1} \left \{ \frac{1}{N} \sum_{i=1}^{N} \left ( \mathbf{x}_{i} - \hat{\bar{\mathbf{X}}} \right )\left ( \hat{V} _{i} - \bar{V} \right ) \right \}
\eeq
is the vector of population regression coefficients. In this case, the design effect associated with $\hat{V}_{GR}$ is distinct from (\ref{deffH}), which is defined as
\beq
deff_{GR} = V_{p} \left ( \hat{V}_{GR} \right ) / \left ( S_{r}^{2} / n \right ),\nonumber
\eeq
where $\hat{V}_{GR} = \sum_{i \in s} \tilde{d}_{i} \left ( s \right ) r_{i}$, $r_{i} = \hat{V}_{i} - \bar{V} - {B}' \left ( \mathbf{x}_{i} - \bar{\mathbf{X}} \right )$, $S_{r}^{2} / n$ is the variance of $\hat{V}_{GR}$ under simple random sampling, and $S_{r}^{2} = \left ( N - 1 \right )^{-1} \sum_{i = 1}^{N} r_{i}^{2}$. Accordingly, the effective sample size is given by $n^{\ast}= n / deff_{GR}$. 

Let $u_{i} = {\left ( \hat{V}_{i} - \theta, {\left ( \mathbf{x}_{i} - \bar{\mathbf{X}} \right )}' \right ) }' $. It can be shown that for a fixed $\theta$, maximizing (\ref{JEL1}) subject to $p_{i}>0$, \ $\sum_{i \in s}p_{i}=1$, (\ref{Con1}) and (\ref{Con2}) gives $\tilde{p}_{i} \left ( \theta \right ) = \tilde{d}_{i} \left ( s \right ) / \left \{ 1 + {\bf{\lambda}}' \left ( \theta \right ) u_{i} \right \}$, where the vector-valued Lagrange multipiler $\bf{\lambda} = \bf{\lambda} \left ( \theta \right )$ is the solution to 
\beq \label{Sol2}
\sum_{i \in s} \frac{\tilde{d}_{i} \left ( s \right ) u_{i}}{1 + {\bf{\lambda}}' u_{i}} = 0.
\eeq
In fact, if the zero vector is an interior point of the covex hull of the set $\left \{ u_{i}, i \in s \right \}$, a solution to (\ref{Sol2}) exists and is unique, which was illustrated by Chen, Sitter and Wu (2002). 

\subsection{Jackknife empirical likelihood with calibration weights}
We now turn to the calibration weights $w_{i}$, which released public by statistical agencies. Let $\tilde{w}_{i} \left ( s \right ) = w_{i} / \sum_{i \in s} w_{i}$ be referred to as the normalized calibration weights and $\sum_{i \in s} \tilde{w}_{i} V_{i}$ is calibrated in the sense of $\sum_{i \in s} \tilde{w}_{i}\mathbf{x}_{i} = \mathbf{X}$, where $\mathbf{X} = N\bar{\mathbf{X}}$ is the known population total of the auxiliary vector $\mathbf{x}$.

The following jackknife pseudo-empirical log-likelihood function for calbration weights, which was proposed by Rao and Wu (2010), is given by 
\beq \label{JEL2}
l_{JEL} \left ( \theta \right ) = m \sum_{i \in s} \tilde{w}_{i} \left ( s \right ) log \left ( p_{i} \left ( \theta \right ) \right ),
\eeq
where $m$ is a scale factor to be determined. Here, we choose $m$ as a design consistent estimator of $S_{V}^{2} / V_{p} \left ( \hat{V}_{GR} \right )$. For a fixed $\theta$, maximizing (\ref{JEL2}) subject to $p_{i} > 0$, \ $\sum_{i \in s} p_{i} = 1$ and constraint (\ref{Con1}), leads to $\breve{p}_{i} \left ( \theta \right ) = \tilde{w}_{i} \left ( s \right ) / \left \{ 1 + \lambda \left ( \hat{V}_{i} - \bar{V} \right ) \right \}$, where the Lagrange multipiler $\lambda$ is the solution to 
\beq \label{Sol3}
\sum_{i \in s} \frac{\tilde{w}_{i} \left ( s \right ) \left ( \hat{V}_{i} - \bar{V} \right )}{1 + \lambda \left ( \hat{V}_{i} - \bar{V} \right )} = 0.
\eeq
We note that the calibration equation (\ref{Con2}) is no longer imposing restrictions on $l_{JEL}$.

\section{Main results}
In this section, we verify that under certain regularity conditions, without or with auxiliary population information, the jackknife pseudo-empirical log-likelihood ratio functions with respect to $l_{JEL}$ are both asymptotically distributed as $\chi_{1}^2$. The proof of theorems is given in appendix.

\subsection{Jackknife empirical likelihood without auxiliary population information}
For no auxiliary population information at the estimation stage, maximizing $l_{JEL}$ subject to $p_{i} > 0$ and $\sum_{i \in s}p_{i} = 1$ gives $\hat{p}_{i} = \tilde{d}_{i} \left ( s \right )$, and the $n^{\ast} = n / deff_{H}$. The jackknife pseudo-empirical log-likelihood ratio function related to $\hat{p}_{i} \left ( \theta \right )$ is given by 
\beq
\hat{r}_{JEL} \left ( \theta \right ) = -2 \left \{ l_{JEL} \left ( \hat{p} \left ( \theta \right ) \right )- l_{JEL} \left ( \hat{p} \right ) \right \}. \nonumber
\eeq

Suppose the following regularity conditions hold

\noindent{\bf (C1)}\quad The sampling design $p\left ( s \right )$, the variable $y$ and the kernel function $h \left ( \cdot \right )$ satisfy \\
$max_{i_{1}, \ldots, i_{m} \in s}\left | h \left( y_{i_{1}}, \ldots, y_{i_{m}} \right ) \right |=o_{p}\left ( n^{-\frac{1}{2}} \right )$, where the random order $o_{p}\left ( \cdot  \right )$  is relevant for the sampling design $p\left ( s \right )$. 

\noindent{\bf (C2)}\quad The sampling design $p\left ( s \right )$ satisfies $N^{-1} \sum_{i \in s}d_{i}-1=O_{p} \left (n^{-\frac {1}{2}}\right) $. \\
We research the asymptotic distribution of $\hat{r}_{JEL} \left ( \theta \right )$ as $n \rightarrow \infty $.

Condition (C1) imposes some constraints on the sampling design $p \left ( s \right )$, the kernel function $h \left ( \cdot \right )$ and the finite population $\left \{ y_{1}, \ldots  , y_{N} \right \}$. On the basis of these constraints, it can be derived that $max_{i \in s} \left | nT_{n}-\left ( n-1 \right )T_{n-1}^{\left ( -i \right )} \right | = o_{p}\left ( n^{\frac{1}{2}} \right )$, namely $max_{i \in s} \left | \hat{V}_{i} \right | = o_{p}\left ( n^{\frac{1}{2}} \right )$. Condition (C2) illustrates that $\hat{N} = \sum_{i \in s} d_{i}$ is a $\sqrt{n}$ -consistent estimator of $N$.

\begin{thm}\label{thm1}
Under the conditions {\rm (C1)} and {\rm (C2)}, the jackknife pseudo-jackknife empirical log-likelihood ratio statistic $\hat{r}_{JEL} \left ( \theta \right )$ is asymptotically distributed as $\chi_{1}^{2}$.

\end{thm}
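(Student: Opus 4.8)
The plan is to transplant Owen's empirical-likelihood expansion into the design-weighted, jackknife-linearized setting, the essential point being that the effective sample size $n^{\ast}=n/deff_{H}$ rescales the profile so that the design effect of the H\'ajek estimator is absorbed exactly. Concretely, I would show that $\hat r_{JEL}(\theta)$ reduces, up to an $o_p(1)$ term, to the self-normalized, design-corrected quantity $(\hat V_{H}-\theta)^{2}/V_{p}(\hat V_{H})$, and then invoke the design-based central limit theorem for $\hat V_{H}$ to obtain the $\chi_{1}^{2}$ limit. Throughout I write $g_{i}=\hat V_{i}-\theta$, $\hat g=\sum_{i\in s}\tilde d_{i}(s)g_{i}=\hat V_{H}-\theta$ and $\hat S=\sum_{i\in s}\tilde d_{i}(s)g_{i}^{2}$.

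First I would record the preliminary reductions furnished by the two regularity conditions. Condition (C1) linearizes the $U$-statistic through its jackknife pseudo-values: since every kernel evaluation is $o_p(n^{-1/2})$, both $T_{n}$ and each $T_{n-1}^{(-i)}$ are $o_p(n^{-1/2})$, whence $\max_{i\in s}|\hat V_{i}|=o_p(n^{1/2})$ and therefore $\max_{i\in s}|g_{i}|=o_p(n^{1/2})$, the survey analogue of Owen's moment condition $\max_{i}|X_{i}|=o_p(n^{1/2})$. Condition (C2) gives $\hat N=\sum_{i\in s}d_{i}$ as a $\sqrt n$-consistent estimate of $N$, so the normalized weights $\tilde d_{i}(s)$ are well behaved and the normalization $\sum_{i\in s}\tilde d_{i}(s)=1$ contributes only an $O_p(n^{-1/2})$ perturbation. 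At this stage I would verify $\hat g=\hat V_{H}-\theta=O_p(n^{-1/2})$ and that $\hat S$ is design-consistent for $S_{V}^{2}$.

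Next comes the empirical-likelihood core, which is essentially bookkeeping. I would first show that the Lagrange multiplier in (\ref{Sol1}) satisfies $\lambda=O_p(n^{-1/2})$: a standard bounding of (\ref{Sol1}), using $\hat S=O_p(1)$ and $\max_{i\in s}|g_{i}|=o_p(n^{1/2})$, forces $|\lambda|\hat S\{1+o_p(1)\}\le|\hat g|$. Expanding (\ref{Sol1}) by $1/(1+x)=1-x+x^{2}/(1+x)$ then yields $\lambda=\hat g/\hat S+o_p(n^{-1/2})$. Substituting $\hat p_{i}(\theta)=\tilde d_{i}(s)/\{1+\lambda g_{i}\}$ into the log-ratio and using $\log(1+x)=x-x^{2}/2+o(x^{2})$ gives
\beq
\hat r_{JEL}(\theta)=2n^{\ast}\sum_{i\in s}\tilde d_{i}(s)\log\{1+\lambda g_{i}\}=n^{\ast}\,\frac{\hat g^{2}}{\hat S}+o_p(1). \nb
\eeq
Finally, using $n^{\ast}=n/deff_{H}$ together with the definition (\ref{deffH}) and $\hat S=S_{V}^{2}+o_p(1)$, the leading term collapses to $n\,deff_{H}^{-1}\hat g^{2}/S_{V}^{2}=(\hat V_{H}-\theta)^{2}/V_{p}(\hat V_{H})+o_p(1)$, which is precisely where the pseudo-likelihood factor $n^{\ast}$ earns its keep.

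The concluding step is the design-based central limit theorem: $(\hat V_{H}-\theta)/\sqrt{V_{p}(\hat V_{H})}\stackrel{d}{\longrightarrow}N(0,1)$, so its square is asymptotically $\chi_{1}^{2}$. I expect the main obstacle to reside exactly here, in the interplay of the two sources of randomness. Unlike the i.i.d.\ jackknife empirical likelihood of Jing et al., the pseudo-values $\hat V_{i}$ are themselves nonlinear functionals of the sampled $y$-values, so one must show that the jackknife approximation error --- the non-leading terms of the Hoeffding-type decomposition of $T_{n}$ --- does not contaminate the design variance, and that the H\'ajek estimator of these pseudo-values obeys a CLT under the unequal-probability design. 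Controlling this joint remainder simultaneously in the $U$-statistic nonlinearity and in the sampling indicators, using only (C1) and (C2), is the delicate part; everything preceding it is the routine empirical-likelihood expansion above.
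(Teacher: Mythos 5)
Your proposal follows essentially the same route as the paper's own proof: bound the Lagrange multiplier via the rearranged score equation using $\max_{i\in s}|\hat V_i - \theta| = o_p(n^{1/2})$ from (C1) and the $\sqrt{n}$-consistency of $\hat N$ from (C2), extract $\lambda = \hat g/\hat S + o_p(n^{-1/2})$, Taylor-expand the log-likelihood ratio to the quadratic $n^{\ast}\hat g^{2}/\hat S + o_p(1)$, and let $n^{\ast} = n/deff_{H}$ convert this into $(\hat V_H - \theta)^2/V_p(\hat V_H)$ before invoking the design-based CLT. The ``delicate part'' you flag at the end --- justifying the H\'ajek CLT for the sample-dependent pseudo-values $\hat V_i$ --- is a legitimate concern, but the paper does not treat it either: it simply cites H\'ajek (1960, 1964) and applies the CLT to the pseudo-values as if they were fixed population quantities, so your proposal is, if anything, more candid about this step than the published proof.
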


\subsection{Jackknife empirical likelihood with auxiliary population information}
For basic design weights with known auxiliary vectors $\mathbf{x}_{i}$, maximizing $l_{JEL}$ subject to $p_{i} > 0$, $\sum_{i \in s}p_{i} = 1$ and (\ref{Con2}), leads to $\tilde{p}_{i}$ and the $n^{\ast} = n / deff_{GR}$. The jackknife pseudo-empirical log-likelihood ratio function related to $\tilde{p}_{i} \left ( \theta \right )$ is given by 
\beq
\tilde{r}_{JEL} \left ( \theta \right ) = -2 \left \{ l_{JEL} \left ( \tilde{p} \left ( \theta \right ) \right )- l_{JEL} \left ( \tilde{p} \right ) \right \}. \nonumber
\eeq

Suppose the extra regularity condition on the auxiliary vectors $\mathbf{x}_{i}$ hold

\noindent{\bf (C3)}\quad $max_{i \in s}\left \| x_{i}  \right \| = o_{p} \left ( n^{\frac{1}{2}} \right )$, where $\left \| \cdot \right \|$ denotes the $L_{1}$ norm. \\
We can derive the asymptotic distribution of $\tilde{r}_{JEL} \left ( \theta \right )$ as $n \rightarrow \infty $.

\begin{thm}\label{thm2}
Under the conditions {\rm (C1)-(C3)}, the pseudo-jackknife empirical log-likelihood ratio statistic $\tilde{r}_{JEL} \left ( \theta \right )$ is asymptotically distributed as $\chi_{1}^{2}$.
\end{thm}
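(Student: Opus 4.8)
The plan is to mirror the argument for Theorem~\ref{thm1}, upgrading the scalar estimating function used there to the vector $u_i=((\hat V_i-\theta),(\mathbf x_i-\bar{\mathbf X})')'$ of (\ref{Sol2}), and then to exploit the block structure of $u_i$ so that the limiting law is $\chi_1^2$ despite the multidimensional constraint. First I would establish the orders of magnitude that control the Lagrange multiplier. Conditions (C1) and (C3) give $\max_{i\in s}|\hat V_i|=o_p(n^{1/2})$ and $\max_{i\in s}\|\mathbf x_i\|=o_p(n^{1/2})$, whence $\max_{i\in s}\|u_i\|=o_p(n^{1/2})$. Writing $\bar u=\sum_{i\in s}\tilde d_i(s)u_i$ and $S_u=\sum_{i\in s}\tilde d_i(s)u_iu_i'$, the design-based central limit theorem for the Hajek estimator together with (C2) yields $\bar u=O_p(n^{-1/2})$, while under the regularity conditions $S_u$ converges in probability (under the design) to a positive definite limit. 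Multiplying the score equation (\ref{Sol2}) by $\boldsymbol\lambda$ and bounding in the usual Owen manner then gives $\|\boldsymbol\lambda\|=O_p(n^{-1/2})$, and expanding (\ref{Sol2}) one term further gives $\boldsymbol\lambda=S_u^{-1}\bar u+o_p(n^{-1/2})$.

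Next I would expand $\tilde r_{JEL}(\theta)$. Substituting $\tilde p_i(\theta)=\tilde d_i(s)/\{1+\boldsymbol\lambda'u_i\}$ and the competing maximizer $\tilde p_i=\tilde d_i(s)/\{1+\boldsymbol\mu'(\mathbf x_i-\bar{\mathbf X})\}$, obtained from $l_{JEL}$ under only $\sum_{i\in s}p_i=1$ and the calibration constraint (\ref{Con2}), the common terms $\sum_{i\in s}\tilde d_i(s)\log\tilde d_i(s)$ cancel, and a second-order expansion of $\log(1+x)$ gives
\beq
\tilde r_{JEL}(\theta)=n^{\ast}\,\bar u'S_u^{-1}\bar u-n^{\ast}\,\mathbf b'S_{xx}^{-1}\mathbf b+o_p(1),\nb
\eeq
where $\mathbf b=\hat{\bar{\mathbf X}}_H-\bar{\mathbf X}$ is the auxiliary block of $\bar u$ and $S_{xx}=\sum_{i\in s}\tilde d_i(s)(\mathbf x_i-\bar{\mathbf X})(\mathbf x_i-\bar{\mathbf X})'$ is the corresponding block of $S_u$. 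Partitioning $S_u$ with scalar block $S_{VV}$ and cross block $S_{Vx}=\sum_{i\in s}\tilde d_i(s)(\mathbf x_i-\bar{\mathbf X})(\hat V_i-\theta)$, and applying the Schur-complement identity, the two quadratic forms collapse to the single scalar
\beq
\tilde r_{JEL}(\theta)=n^{\ast}\,\frac{\bigl(a-S_{Vx}'S_{xx}^{-1}\mathbf b\bigr)^2}{S_{VV\cdot x}}+o_p(1),\quad a=\hat V_H-\theta,\ \ S_{VV\cdot x}=S_{VV}-S_{Vx}'S_{xx}^{-1}S_{Vx}.\nb
\eeq

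To finish, I would identify these pieces with design quantities. Since $S_{xx}^{-1}S_{Vx}$ is design-consistent for the population regression coefficient $B$ in (\ref{B}), the numerator equals $(\hat V_H+B'(\bar{\mathbf X}-\hat{\bar{\mathbf X}}_H)-\theta)^2+o_p(n^{-1})=(\hat V_{GR}-\theta)^2+o_p(n^{-1})$, while $S_{VV\cdot x}$ is design-consistent for the residual variance $S_r^2$. Using $n^{\ast}=n/deff_{GR}=S_r^2/V_p(\hat V_{GR})$, the factors of $S_r^2$ cancel and
\beq
\tilde r_{JEL}(\theta)=\frac{(\hat V_{GR}-\theta)^2}{V_p(\hat V_{GR})}+o_p(1),\nb
\eeq
so the design-based central limit theorem for $\hat V_{GR}$ gives $\tilde r_{JEL}(\theta)\xrightarrow{d}\chi_1^2$.

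I expect the main obstacle to be justifying, under the sampling design rather than under i.i.d.\ sampling, that the jackknife pseudo-values $\hat V_i$ are asymptotically equivalent to a multiple of the first-order Hoeffding projection of the kernel $h$, so that the design moment and central-limit statements invoked above are legitimate; condition (C1) is precisely what renders the jackknife-linearization remainder negligible uniformly over the random sample $s$, and its careful use is the technical core of the argument. A second delicate point, already present for Theorem~\ref{thm1}, is that the empirical likelihood intrinsically produces the simple-random-sampling variance $S_r^2$ in the denominator, so that only the effective-sample-size rescaling $n^{\ast}$ replaces $S_r^2$ by the correct design variance $V_p(\hat V_{GR})$; checking that the chosen $n^{\ast}$ is design-consistent for $S_r^2/V_p(\hat V_{GR})$ — now with the residuals $r_i$ in place of the centered pseudo-values — is what makes the final cancellation, and hence the $\chi_1^2$ limit, go through.
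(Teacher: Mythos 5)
Your proposal is correct and follows essentially the same route as the paper's proof: the multiplier order $\|\boldsymbol{\lambda}\|=O_{p}\left ( n^{-1/2} \right )$ from (C1)--(C3), the second-order expansion of both constrained maxima into a difference of two quadratic forms, and the final identification of the statistic with $\left ( \hat{V}_{GR}-\theta \right )^{2}/V_{p}\left ( \hat{V}_{GR} \right )+o_{p}\left ( 1 \right )$ via the design-based central limit theorem and the scaling $n^{\ast}=n/deff_{GR}$. Your Schur-complement collapse of $\bar{u}'S_{u}^{-1}\bar{u}-\mathbf{b}'S_{xx}^{-1}\mathbf{b}$ performed in the original coordinates is the exact algebraic counterpart of the paper's two ``crucial arguments'' --- reformulating the constraints through the residuals $r_{i}=\hat{V}_{i}-\bar{V}-B'\left ( \mathbf{x}_{i}-\bar{\mathbf{X}} \right )$ and then using block-diagonality of the population moment matrix --- so the two proofs differ only in whether the orthogonalization by the regression coefficient $B$ is carried out before the expansion (paper) or after it, via the partitioned-inverse identity and design-consistency of $S_{xx}^{-1}S_{Vx}$ for $B$ (yours).
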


For calibration weights with known auxiliary vectors $\mathbf{x}_{i}$, due to the same constraints, the $\breve{p}$ and $\breve{r}_{JEL}$ has the identical form as the $\hat{p}$ and $\hat{r}_{JEL}$ except for the different weights. Hence, a result similar to Theorem \ref{thm1} is obtained. 

Suppose the following regularity condition hold

\noindent{\bf (C4)}\quad The sampling design $p\left ( s \right )$ satisfies $N^{-1} \sum_{i \in s}w_{i}-1=O_{p} \left (n^{-\frac {1}{2}}\right) $. \\
We research the asymptotic distribution of $\breve{r}_{JEL} \left ( \theta \right )$ as $n \rightarrow \infty $. Maximizing $l_{JEL}$ subject to $p_{i} > 0$ and $\sum_{i \in s}p_{i} = 1$, leads to $\breve{p}_{i}$ and the $m = S_{V}^{2} / V_{p} \left ( \hat{V}_{GR} \right ) $. The jackknife pseudo-empirical log-likelihood ratio function related to $\breve{p}_{i} \left ( \theta \right )$ is given by 
\beq
\breve{r}_{JEL} \left ( \theta \right ) = -2 \left \{ l_{JEL} \left ( \breve{p} \left ( \theta \right ) \right )- l_{JEL} \left ( \breve{p} \right ) \right \}. \nonumber
\eeq
%

\begin{thm}\label{thm3}
Under the conditions {\rm (C1),(C2) and (C4)}, the pseudo-jackknife empirical log-likelihood ratio statistic $\breve{r}_{JEL} \left ( \theta \right )$ is asymptotically distributed as $\chi_{1}^{2}$.
\end{thm}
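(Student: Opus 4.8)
The starting point is the observation stated just before the theorem: with calibration weights the calibration equation (\ref{Con2}) is already absorbed into $\tilde w_i(s)$, so the optimization defining $\breve p_i(\theta)$ carries only the single constraint (\ref{Con1}), and $\breve p_i(\theta)$, $\breve p_i$, $\breve r_{JEL}(\theta)$ coincide algebraically with $\hat p_i(\theta)$, $\hat p_i$, $\hat r_{JEL}(\theta)$ of Theorem \ref{thm1} after replacing $\tilde d_i(s)$ by $\tilde w_i(s)$ and the scale factor $n^\ast = S_V^2/V_p(\hat V_H)$ by $m = S_V^2/V_p(\hat V_{GR})$. My plan is therefore to rerun the Theorem \ref{thm1} argument under these substitutions: (C4) is used to control the calibration-weight normalization $N^{-1}\sum_{i\in s}w_i$ exactly where (C2) controlled $N^{-1}\sum_{i\in s}d_i$, while (C2) itself is retained because the design-consistent estimation of the scale factor $m$ (through $V_p(\hat V_{GR})$) still leans on the design weights.

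First I would establish that the multiplier $\lambda$ solving (\ref{Sol3}) is $O_p(n^{-1/2})$, by the usual Owen-type argument using the bound $\max_{i\in s}|\hat V_i| = o_p(n^{1/2})$ supplied by (C1), the $\sqrt n$-consistency of $\hat N=\sum_{i\in s}w_i$ supplied by (C4), and the design consistency of $\hat V_W=\sum_{i\in s}\tilde w_i(s)\hat V_i$ for $\theta$. A two-term Taylor expansion of the constraint (\ref{Con1}) then yields $\lambda=(\hat V_W-\theta)/b+o_p(n^{-1/2})$, where $b=\sum_{i\in s}\tilde w_i(s)(\hat V_i-\theta)^2$, and substituting this back into $\breve r_{JEL}(\theta)=2m\sum_{i\in s}\tilde w_i(s)\log\{1+\lambda(\hat V_i-\theta)\}$ collapses the ratio to
\beq
\breve r_{JEL}(\theta) = m\,\frac{(\hat V_W-\theta)^2}{b} + o_p(1), \nonumber
\eeq
the remainder again being controlled by (C1).

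Two asymptotic ingredients then finish the proof. The weighted second moment $b$ is design-consistent for the finite-population variance $S_V^2$. For the numerator, the jackknife pseudo-values $\hat V_i=nT_n-(n-1)T_{n-1}^{(-i)}$ linearize the $U$-statistic $T_n$, so $\hat V_W$ is asymptotically equivalent to the GREG estimator $\hat V_{GR}$, and a design-based central limit theorem gives that $(\hat V_W-\theta)^2/V_p(\hat V_{GR})$ tends in distribution to $\chi_1^2$. Since $m$ is design-consistent for $S_V^2/V_p(\hat V_{GR})$, the prefactor satisfies $m/b\to 1/V_p(\hat V_{GR})$ in probability; the design-variance mismatch cancels exactly, and $\breve r_{JEL}(\theta)$ tends in distribution to $\chi_1^2$.

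I expect the genuine obstacle to be the probabilistic core rather than the likelihood algebra. Establishing the design-based central limit theorem for the jackknife-linearized weighted sum $\hat V_W$, together with the $o_p(1)$ control of the $U$-statistic jackknife remainder, is where the work lies; this is precisely what (C1) is designed to secure, ensuring that the non-linear part of $T_n$ and the fluctuations of the pseudo-values are asymptotically negligible. The subtlety specific to Theorem \ref{thm3} is verifying that the limiting variance is $V_p(\hat V_{GR})$ rather than $V_p(\hat V_H)$—a consequence of the calibration constraint embedded in $\tilde w_i(s)$—since it is this identification that matches the chosen scale factor $m$ and pins the limiting law at $\chi_1^2$.
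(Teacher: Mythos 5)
Your proposal is correct and follows essentially the same route as the paper's own proof: an Owen-type bound yielding $\lambda = O_p(n^{-1/2})$ from (C1) and (C4), a second-order Taylor expansion collapsing $\breve{r}_{JEL}(\theta)$ to $m(\hat{V}_W-\theta)^2/b + o_p(1)$, and cancellation of the design variance against the scale factor $m$. If anything you are slightly more explicit than the paper, which writes the limiting variance as $V_p\bigl(\breve{\bar{V}}_{H}\bigr)$ and leaves implicit the asymptotic equivalence of the calibration-weighted estimator and $\hat{V}_{GR}$ that makes the chosen $m$ pin the limit at $\chi_1^2$ — the identification you single out at the end.
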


\vskip 3mm

\section{Simulation studies}
In this section, under the Rao-Sampford method (Rao 1965, Sampford 1967) of sampling without replacement with inclusion probabilities $\pi_{i}$ by using the jackknife pseudo-empirical likelihood, we conduct simulation studies to examine the performance of coverage probabilities (CP), lower (L) and upper (U) tail error rates, average lengths of confidence intervals (AL) and average lower bound (LB) for the 95\% confidence intervals.

To generate finite populations from model I used by Wu and Rao(2006) 
\beq
y_{i}=\beta _{0}+\beta _{1}x_{i}+\sigma \varepsilon _{i}, \nonumber
\eeq
where $\beta _{0}=\beta _{1}=1$, $x_{i}\sim exp\left ( 1 \right )$ and $\varepsilon _{i}\sim N \left( 0,1 \right )$, population size $N=1000$, sample size $n=100$ and $150$, corresponding to sampling frcations 10\% and 15\% respectively. An appropriate constant number was added to all $x_{i}$ to get ride of extremely small values of $x_{i}$. We use two different values of $\sigma$ to reflect the different correlation between $y$ and $x$: $\rho \left ( y,x \right )=0.3$ and $0.5$. The resulting finite populations remain constant in repeated simulation runs. Our simulations are programmed in R using the algorithms outlined in Wu (2005) and {\it http://www.blackwellpublishing.com/rss}.

We consider two example: the so-called probability weighted moment and the estimate of variance. For the jackknife pseudo-empirical likelihood method, the interval based on no auxiliary variable is denoted by $JEL$, the interval by using the basic design weights $d_{i}$ is denoted by $JEL_{d}$ and the interval by using the calibration weights $w_{i}$ is denoted by $JEL_{w}$. Since the U-statistic has asymptotic normality, we report the results of the interval based on the normal approximation to the usual Z-statistic, which is denoted by $NA$. All the results are based on $B = 1000$ simulation runs.

\begin{ex}\label{ex1}\rm\;\
Let the probability weighted moment, $\theta = E \left \{ yF\left (  y\right ) \right \}$, where $F$ is the distribution function. Then, the sample probability weiighted moment is a $U-$ statistic with the kernel $h= max\left ( x,y \right )/2$.
\end{ex}

Table \ref{tab1} reports the results of example \ref{ex1}, which can be summarized as follows.

(a) The JEL method tail error rates perform more balance than those from method NA, with coverage probabilities closer to the nominal value, but the average length is slightly bigger. At each tail, the method NA leads to smaller L and larger U than the nominal 2.5\% rate. For instance, with n=100 and $\rho \left ( y,x \right ) = 0.3$, L=0.8 and U=7.7 for method NA compared to L=2.7 and U=4.2 for method JEL.

(b) The $JEL_d$ and $JEL_w$ methods intervals both have performance that is similar to that of method NA, but the former have better coverage probability. For instance, when $\rho \left ( y,x \right ) = 0.3$, CP=92.6\% for $JEL_d$ and $JEL_w$ compared to CP=91.5\% for NA with n=100.

(c) The $JEL_d$ and $JEL_w$ methods have comparable lower bound (LB) and the latter has the largest one in all cases, a characteristic which is available in some applications. The smaller lower tail error rate (L) for NA is related to the smaller lower bound (LB).

\begin{table}[htbp]
\centering
\small 
\caption{Covergae probabilties, average lenths of the confidence intervals and tail error rates on the probability weighted moment in Example \ref{ex1} when the norminal level is 0.95.}\vskip -1mm
\label{tab1}
\begin{tabular}{cccccccc}
					\hline
					\makebox[0.1\textwidth][c]{$\rho$}  & \makebox[0.1\textwidth][c]{n} & \makebox[0.1\textwidth][c]{CI} & \makebox[0.1\textwidth][c]{CP(\%)} & \makebox[0.1\textwidth][c]{L} & \makebox[0.1\textwidth][c]{U} & \makebox[0.1\textwidth][c]{AL} & \makebox[0.1\textwidth][c]{LB}\\
					\hline
					0.3	&	100	&	 		 NA		 	 & 91.5 & 0.8 & 7.7 & 0.600 & 3.588\\
							&			&			 JEL		 	 & 93.1 & 2.7 & 4.2 & 0.627 & 3.611\\
							&			&   $JEL_{d}$  	 & 92.6 & 1.5 & 5.9 & 0.598 & 3.616\\
							&			&   $JEL_{w}$  	 & 92.6 & 1.5 & 5.9 & 0.598 & 3.619\\
							&	150	&			 NA		 	 & 91.6 & 1.1 & 7.3 & 0.481 & 3.657\\
							&			&			 JEL		 	 & 93.5 & 1.5 & 5.0 & 0.499 & 3.667\\
							&			&   $JEL_{d}$  	 & 92.4 & 1.3 & 6.3 & 0.480 & 3.675\\
							&			&   $JEL_{w}$  	 & 92.6 & 1.3 & 6.1 & 0.479 & 3.676\\
					0.5	&	100	&	 		 NA		 	 & 90.9 & 0.7 & 8.4 & 0.321 & 3.368\\
							&			&			 JEL		 	 & 93.6 & 2.2 & 4.2 & 0.375 & 3.368\\
							&			&   $JEL_{d}$  	 & 91.9 & 1.4 & 6.7 & 0.319 & 3.382\\
							&			&   $JEL_{w}$  	 & 91.5 & 1.7 & 6.8 & 0.321 & 3.386\\
							&	150	&			 NA		 	 & 90.8 & 1.0 & 8.2 & 0.257 & 3.406\\
							&			&			 JEL		 	 & 94.7 & 1.3 & 4.0 & 0.296 & 3.399\\
							&			&   $JEL_{d}$  	 & 91.9 & 1.3 & 6.8 & 0.256 & 3.415\\
							&			&   $JEL_{w}$	 	 & 92.1 & 1.3 & 6.6 & 0.257 & 3.417\\
					\hline
\end{tabular}
\end{table}

\begin{ex}\label{ex2}\rm\;\
Consider the estimate of variance, where the kernel $h = \left ( x-y \right )^{2}/2$. And the U-statistic can be expressed as $U = \sum_{i=1}^{n} \sum_{j=1}^{n} \left ( y_{i} - y_{j} \right )^{2} / \left \{ n \left ( n-1 \right ) \right \}$.
\end{ex}

Table \ref{tab2} summarizes the results of example \ref{ex2}. It appears from Table \ref{tab2} that the coverage probability for the JEL, $JEL_d$ and $JEL_{w}$ methods are all quite closer to the nominal values than method NA. Although there is a small to moderate expansion in length. The only unsatisfactory aspect for the jackknife pseudo-empirical likelihood method is the tail error rates so generated are not balanced. It is evident that when increasing the sample size to n=150, the performance of the method proposed in this artical improve significantly.

\begin{table}[htbp]
\centering
\small 
\caption{Covergae probabilties, average lenths of the confidence intervals and tail error rates on the estimate of variance in Example \ref{ex2} when the norminal level is 0.95.}\vskip -1mm
\label{tab2}
\begin{tabular}{cccccccc}
					\hline
					\makebox[0.1\textwidth][c]{$\rho$}  & \makebox[0.1\textwidth][c]{n} & \makebox[0.1\textwidth][c]{CI} & \makebox[0.1\textwidth][c]{CP(\%)} & \makebox[0.1\textwidth][c]{L} & \makebox[0.1\textwidth][c]{U} & \makebox[0.1\textwidth][c]{AL} & \makebox[0.1\textwidth][c]{LB}\\
					\hline
					0.3	&	100	&	 		 NA		 	 & 90.4 & 0.4 & 9.2 & 5.705 & 7.813\\
							&			&			 JEL		 	 & 93.4 & 1.0 & 5.6 & 5.887 & 8.242 \\
							&			&   $JEL_{d}$  	 & 93.4 & 1.0 & 5.6 & 5.820 & 8.236 \\
							&			&   $JEL_{w}$  	 & 93.7 & 1.0 & 5.3 & 5.857 & 8.255\\
							&	150	&			 NA		 	 & 92.2 & 0.2 & 7.6 & 4.564 & 8.509\\
							&			&			 JEL		 	 & 94.8 & 0.5 & 4.7 & 4.663 & 8.788\\
							&			&   $JEL_{d}$  	 & 94.3 & 0.6 & 5.1 & 4.658 & 8.795\\
							&			&   $JEL_{w}$  	 & 94.3 & 0.7 & 5.0 & 4.662 & 8.809\\
					0.5	&	100	&	 		 NA		 	 & 91.1 & 0.5 & 8.4 & 2.080 & 2.858\\
							&			&			 JEL		 	 & 93.6 & 1.9 & 4.5 & 2.220 & 3.019\\
							&			&   $JEL_{d}$  	 & 93.2 & 1.8 & 5.0 & 2.336 & 3.017\\
							&			&   $JEL_{w}$  	 & 93.2 & 2.2 & 4.6 & 2.183 & 3.036\\
							&	150	&			 NA		 	 & 92.0 & 0.7 & 7.3 & 1.652 & 3.113 \\
							&			&			 JEL		 	 & 95.2 & 1.0 & 3.8 & 1.743 & 3.218\\
							&			&   $JEL_{d}$  	 & 94.9 & 0.9 & 4.2 & 1.781 & 3.225\\
							&			&   $JEL_{w}$	 	 & 95.5 & 1.0 & 3.5 & 1.733 & 3.239\\
					\hline
\end{tabular}
\end{table}

\section{Application to the China Household Finance Survey data}
China Household Finance Survey (CHFS) is a nationwide sample survey project, aiming to collect relevant information on the micro level of household finance. The CHFS describe household economic and financial behaviors comprehensively and meticulously. The data used in this section are from the China Household Finance Survey project (CHFS) managed by Survey and Research Center For China Household Finance of Southwestern University of Finance and Economics.

We apply proposed jackknife pseudo-empirical likelihood method to the 2019 CHFS data set, which covers 29 provinces, 170 cities, 345 districts and counties, and 1,360 village (residential) committees in China, with a sample size of 34,643 households. The data are representative at both the national and provincial levels. Note that the CHFS data sets are not publicly available but can be accessed through an approval process (https://chfs.swufe.edu.cn/). 

The response variable of interest that we chose is Y: resident happiness. The data set include a column of household sample weight $w_{i}$ for analytical purposes and we treat the sample as selected through a single-stage unequal probability sampling design with design weight $d_{i} = \pi_{i}^{-1}$ equal to the calibration weight $w_{i}$. We select the first, second and third tier city division variable as the auxiliary population information $x_{i}$. Due to the massive sample size, we adopt the method of equal probability random sampling to extract 10,000 samples for research.

We consider the variance of resident happiness, and the U-statistic can be represented as $U = \sum_{i=1}^{n} \sum_{j=1}^{n} \left ( y_{i} - y_{j} \right )^{2} / \left \{ n \left ( n-1 \right ) \right \}$, where n=10,000. The 95\% credible interval for jackknife empirical likelihood without auxiliary population information is (0.762,0.805). For comparison, the confidence intervals for jackknife empirical likelihood with basic design weights and jackknife empirical likelihood with calibration weights are both (0.758,0.800). From the result, we can conclude that the level of city has less effect on resident happiness.

\section{Conclusion}
In survey sampling, the performance of confidence intervals based on normal approximation is usually unsatisfactory. Therefore, in this article, we showcase the possibility of utilizing jackknife pseudo-empirical likelihood to solve problems involving the estimation of U-statistics in complex surveys, and at the estimation process reflect the characteristics of sampling design and the use of auxiliary information. The validity of such usage is examined through the use of theoretical justification and empirical evidence. Compared with the normal interval, the jackknife pseudo-empirical likelihood interval performs better in terms of balanced tail error rates and coverage probabilities.

Extending our proposed jackknife empirical likelihood method to other survey sampling subjects, such as stratified sampling, is currently under investigation. There have been significant advances in recent years on non-probability sampling. Exploring how to take full advantage of probability and non-probability sampling and combine them to better estimate statistics is an attractive topic. For such combined samples, it is valuable to adopt the method proposed for similar analysis.

\section*{Appendix}
In this section, we give proofs of main results.

\begin{proof}[Proof of Theorem \rm\ref{thm1}]

By rewriting $\tilde{d}_{i}\left ( s \right )\left ( v_{i}-\bar{V} \right )$ as $\tilde{d}_{i}\left ( s \right )\left ( v_{i}-\bar{V} \right )\left [ 1+\lambda\left ( \hat{v}_{i}-\bar{V} \right )-\lambda\left ( \hat{v}_{i}-\bar{V} \right ) \right ]$, we can rearrange (\ref{Sol1}) to obtain 
\beq \label{A1}
\lambda \sum_{i \in s}\frac{\tilde{d}_{i}\left ( s \right )\left ( v_{i}-\bar{V} \right )^2}{1+\lambda\left ( v_{i}-\bar{V} \right )}= \sum_{i \in s}\tilde{d}_{i}\left ( s \right ) \hat{v}_{i}-\bar{V}. \tag{A1}
\eeq
It follows from (\ref{A1}) that 
\beq \label{A2}
\frac{\left |\lambda  \right |}{1+\left |\lambda  \right |u^{\ast }}\sum_{i \in s} \tilde{d}_{i}\left ( s \right )\left ( \hat{v}_{i}-\bar{V} \right )^2\leq \left | \sum_{i \in s} \tilde{d}_{i}\left ( s \right )\hat{v}_{i}-\bar{V} \right | \tag{A2}
\eeq
where $u^{\ast }=max_{i \in s}\left | \hat{v}_{i}-\bar{V} \right |$ which is of order $o_{p} \left (n^{1/2} \right )$ by condition (Cl). It follows from Hajek (1960,1964), we can get the central limit theorem for a Horvitz-Thompson estimator, namely $\hat{\bar{V}}_{HT}=N^{-1} \sum_{i \in s} d_{i} \hat{v}_{i}$ of $\bar{V}$ is asymptotically normally distributed. Then, we can require $\hat{\bar{V}}_{HT}=\bar{V}+O_{p}\left( n^{-1/2} \right )$. Under conditions (C2), we have $\hat{N}/N=1+O_{p}\left( n^{-1/2} \right )$, where $\hat{N}=\sum_{i \in s} d_{i}$, which imply $\sum_{i \in s} \tilde{d}_{i}\left ( s \right )\hat{v}_{i}=\hat{\bar{V}}_{HT}/\left ( \hat{N}/N \right )= \bar{V}+O_{p}\left ( n^{-1/2} \right )$. Noting that $\sum_{i \in s} \tilde{d}_{i}\left ( s \right ) \left( \hat{v}_{i}-\bar{V} \right )^2$ is the Hajek-type estimator of $S_{V}^{2}$ which is of order $O \left ( 1 \right )$, it follows from (\ref{A2}) that we must have $\lambda = O_{p}\left ( n^{-1/2} \right )$ and, consequently, $max_{i \in s} \left | \lambda \left ( \hat{v}_{i}-\bar{V} \right ) \right |= o_{p}\left (1 \right )$. This together with (\ref{A1}) leads to 
\beq
\lambda =\left \{ \sum_{i \in s} \tilde{d}_{i}\left ( s \right ) \left ( \hat{v_{i}}-\bar{V} \right )^2 \right \}^{-1}\left ( \sum_{i \in s} \tilde{d}_{i}\left ( s \right )\hat{v_{i}}-\bar{V} \right )+o_{p}\left ( n^{-1/2} \right ). \nonumber
\eeq
Uing a Taylor series expansion of $log \left( 1+x \right )$ at $x=\lambda \left ( \hat{v}_{i}-\bar{V} \right)$ up to the second order, we obtain 
\beq
\begin{split}
\hat{r}_{JEL} \left ( \theta \right ) & = 2n^{\ast } \sum_{i \in s} \tilde{d}_{i}\left ( s \right )log \left \{ 1+\lambda\left ( \hat{v}_{i}-\bar{V} \right ) \right \} \\
& = n^{\ast } \left ( \sum_{i \in s} \tilde{d}_{i}\left ( s \right ) \hat{v}_{i}-\bar{V} \right )^{2}/\left ( \sum_{i \in s} \tilde{d}_{i}\left ( s \right ) \left (\hat{v}_{i}-\bar{V}  \right )^2 \right )+o_{p}\left ( 1 \right )
\end{split} \nonumber
\eeq
Since $\sum_{i \in s} \tilde{d}_{i} \left ( s\right )\left ( \hat{v}_{i}-\bar{V} \right )^{2}=S_{V}^{2}+o_{p} \left (1 \right )$, and $\sum_{i \in s} \tilde{d}_{i} \left ( s\right ) \hat{v}_{i}$ is asymptotically normal with mean $\bar{V}$ and variance $V_{p} \left( \hat{\bar{V}}_H \right )$ under Conditions (C2) and the central limit theorem for a Horvitz-Thompson estimator, the conclusion that the adjusted pseudo-empirical likelihood ratio statistic converges in distribution to $\chi _{1}^{2}$ follows immediately since $ \hat{r}_{JEL} \left ( \theta \right ) = \left \{ \sum_{i \in s} \tilde{d}_{i} \left (s \right ) \hat{v}_{i} -\bar{V} \right \}^{2} / V_{p}\left \{ \sum_{i \in s} \tilde{d}_{i} \left (s \right ) \hat{v}_{i} \right \}+o_{p}\left ( 1 \right )$.
\end{proof}

\begin{proof}[\bf\it Proof of Theorem \rm\ref{thm2}]

The arguments on the order of magnitude and the asymptotic expansion of the involved Lagrange multiplier are similar to those given in the proof of Theorem \ref {thm1}. There are two crucial arguments, however, which are unique to this proof. The $\tilde{p}_{i}$ which maximize $l_{JEL} \left( \theta \right )$ subject to $\sum_{i \in s}p_{i}=1$ and $\sum_{i \in s}p_{i}x_{i}=\bar{X}$ are given by $\tilde{p}_{i}=\tilde{d}_{i} \left ( s \right )/ \left \{ 1+\lambda^{T} \left ( x_{i}-\bar{X} \right ) \right \}$, where the $\lambda$ is the solution to 
\beq
\sum_{i \in s} \frac{\tilde{d}_{i} \left ( s \right ) \left ( \bf{x}_{i} - \bar{\bf{X}} \right ) }{1 + {\lambda}' \left ( \bf{x}_{i} - \bar{\bf{X}} \right )} = 0.  \nonumber 
\eeq
Under Conditions (C2), (C3) and $\hat{\bar{X}}_{HT}=N^{-1} \sum_{i \in s} d_{i} x_{i}$ is asymptotically normally distributed, we can show that $\left \| \lambda \right \|=O_{p} \left ( n^{-1/2} \right )$ and 
\beq
\lambda = \left \{ \sum_{i \in s} \tilde{d}_{i} \left ( s \right ) \left ({x_{i}-\bar{X}} \right ) \left ( {x_{i}-\bar{X}} \right )^{T} \right \}^{-1} \left ( \sum_{i \in s} \tilde{d}_{i} \left( s \right ) x_{i}-\bar{X} \right ) + o_{p} \left ( n^{-1/2} \right ) \nonumber
\eeq
With the term $n \sum_{i \in s} \tilde{d}_{i} \left ( s \right ) log \left ( \tilde{d}_{i} \left ( s \right ) \right )$ omitted, we obtain the following asymptotic expansion for $l_{ns} \left ( \tilde{p} \right )$: 
\beq \label{A3}
-\frac{n}{2} \left ( \sum_{i \in s} \tilde{d}_{i} x_{i} - \bar {X} \right )^{T}\left \{ \sum_{i \in s} \tilde{d}_{i} \left ( s \right )\left ({x_{i}-\bar{X}} \right )   \left ( {x_{i}-\bar{X}} \right )^{T} \right \}^{-1}\left ( \sum_{i \in s} \tilde{d}_{i} x_{i}- \bar {X} \right )+o_{p} \left ( 1 \right ). \tag{A3}
\eeq

To obtain a similar expansion for $l_{ns} \left ( \tilde{p} \left ( \bar{V} \right ) \right )$ where $\tilde{p} \left ( \bar{V} \right )$ maximize $l_{ns} \left ( p \right )$ subject to $\sum_{i \in s}p_{i}=1$, (\ref{Con1}) and (\ref{Con2}), our first crucial argument is to reformulate the constrained maximization problem as follows: let $r_{i}=\hat{v}_{i}- \bar{V}-B^{T} \left ( x_{i}-\bar{X} \right )$ where $B$ is defined by (\ref{B}). Then the set of constraints is equivalent to
\beq\label{A4}
\sum_{i \in s}p_{i}=1,\quad \sum_{i \in s}p_{i}x_{i}=\bar{X} \quad and \quad \sum_{i \in s}p_{i}r_{i} = 0. \tag{A4}
\eeq
With complete parallel development that leads to $l_{ns}$ given by (\ref{A3}), maximizing $l_{JEL}\left ( \tilde{p} \right )$ subject to (\ref{A4}) leads to the following expansion for $l_{JEL} \left ( \tilde{p} \left ( \bar{V} \right ) \right )$
\beq\label{A5}
-\frac{n}{2} \left ( \sum_{i \in s} \tilde{d}_{i} u_{i} - \bar {U} \right )^{T}\left \{ \sum_{i \in s} \tilde{d}_{i} \left ( s \right ) \left ( u_{i} - \bar {U} \right )  \left ( u_{i} -\bar {U} \right )^{T} \right \}^{-1}\left ( \sum_{i \in s} \tilde{d}_{i} u_{i} - \bar {U} \right )+o_{p} \left ( 1 \right ), \tag{A5}
\eeq
where $u_{i}=\left ( x_{i}^{T}, r_{i} \right )^{T}$ and $\bar{U}= \left (\bar{X}^{T},0 \right )^{T}$. Our second crucial argument is the observation that $\sum^{N}_{i=1} \left (x_{i}-\bar{X} \right )r_{i}$, i.e., the matrix involved in the middle of (\ref{A5}) is an estimate for its population counterpart which is block diagonal. It is straightforward to show that 
\beq
\tilde{r}_{JEL} \left ( \theta \right ) = -2 \left \{ l_{ns} \left ( \tilde{p} \left ( \bar{V} \right ) \right ) - l_{ns} \left ( \tilde{p} \right ) \right \} = n \left ( \sum_{i \in s} \tilde{d}_{i} \left (s \right ) r_{i} \right ) ^{2} / \left ( \frac{1}{N} \sum_{i=1}^{N} r_{i}^{2} \right ) +o_{p} \left ( 1 \right ). \nonumber
\eeq
The conclusion of the theorem follows since $\sum_{i \in s} \tilde{d}_{i} \left ( s \right ) r_{i}$ is asymptotically normal with mean 0 and variance $V_{p} \left \{ \sum_{i \in s} \tilde{d}_{i} \left ( s \right ) r_{i} \right \}$.
\end{proof}

\begin{proof}[Proof of Theorem \rm\ref{thm3}]

By rewriting $\tilde{w}_{i}\left ( s \right )\left ( v_{i}-\bar{V} \right )$ as $\tilde{w}_{i}\left ( s \right )\left ( v_{i}-\bar{V} \right )\left [ 1+\lambda\left ( \hat{v}_{i}-\bar{V} \right )-\lambda\left ( \hat{v}_{i}-\bar{V} \right ) \right ]$, we can rearrange (\ref{Sol3}) to obtain 
\beq \label{A6}
\lambda \sum_{i \in s}\frac{\tilde{w}_{i}\left ( s \right )\left ( v_{i}-\bar{V} \right )^2}{1+\lambda\left ( v_{i}-\bar{V} \right )}= \sum_{i \in s} \tilde{w}_{i} \left ( s \right ) \hat{v}_{i}- \bar{V}. \tag{A6}
\eeq
It follows from (\ref{A6}) that 
\beq \label{A7}
\frac{\left |\lambda  \right |}{1+\left |\lambda  \right |u^{\ast }}\sum_{i \in s} \tilde{w}_{i}\left ( s \right )\left ( \hat{v}_{i}-\bar{V} \right )^2\leq \left | \sum_{i \in s} \tilde{w}_{i}\left ( s \right )\hat{v}_{i}-\bar{V} \right | \tag{A7}
\eeq
where $u^{\ast }=max_{i \in s}\left | \hat{v}_{i}-\bar{V} \right |$ which is of order $o_{p} \left (n^{1/2} \right )$ by condition (Cl). It follows from Hajek (1960,1964), we can get the central limit theorem for a Horvitz-Thompson estimator, namely $\breve{\bar{V}}_{HT}=N^{-1} \sum_{i \in s} w_{i} \hat{v}_{i}$ of $\bar{V}$ is asymptotically normally distributed. Then, we can require $\breve{\bar{V}}_{HT}=\bar{V}+O_{p}\left( n^{-1/2} \right )$. Under conditions (C4), we have $\breve{N}/N=1+O_{p}\left( n^{-1/2} \right )$, where $\breve{N}=\sum_{i \in s} w_{i}$, which imply $\sum_{i \in s} \tilde{w}_{i}\left ( s \right )\hat{v}_{i}=\breve{\bar{V}}_{HT}/ \left ( \breve{N}/N \right )= \bar{V}+O_{p}\left ( n^{-1/2} \right )$. Noting that $\sum_{i \in s} \tilde{w}_{i}\left ( s \right ) \left( \hat{v}_{i}-\bar{V} \right )^2$ is the Hajek-type estimator of $S_{V}^{2}$ which is of order $O \left ( 1 \right )$, it follows from (\ref{A2}) that we must have $\lambda = O_{p}\left ( n^{-1/2} \right )$ and, consequently, $max_{i \in s} \left | \lambda \left ( \hat{v}_{i}-\bar{V} \right ) \right |= o_{p}\left (1 \right )$. This together with (\ref{A6}) leads to 
\beq
\lambda =\left \{ \sum_{i \in s} \tilde{w}_{i}\left ( s \right ) \left ( \hat{v_{i}}-\bar{V} \right )^2 \right \}^{-1}\left ( \sum_{i \in s} \tilde{w}_{i}\left ( s \right )\hat{v_{i}}-\bar{V} \right )+o_{p}\left ( n^{-1/2} \right ). \nonumber
\eeq
Uing a Taylor series expansion of $log \left( 1+x \right )$ at $x=\lambda \left ( \hat{v}_{i}-\bar{V} \right)$ up to the second order, we obtain 
\beq
\begin{split}
\breve{r}_{JEL}\left ( \theta \right ) & = 2 m \sum_{i \in s} \tilde{w}_{i}\left ( s \right )log \left \{ 1+\lambda\left ( \hat{v}_{i}-\bar{V} \right ) \right \} \\
& = m \left ( \sum_{i \in s} \tilde{w}_{i}\left ( s \right ) \hat{v}_{i}-\bar{V} \right )^{2}/\left ( \sum_{i \in s} \tilde{w}_{i}\left ( s \right ) \left (\hat{v}_{i}-\bar{V}  \right )^2 \right ) + o_{p}\left ( 1 \right )
\end{split} \nonumber
\eeq
Since $\sum_{i \in s} \tilde{w}_{i} \left ( s\right )\left ( \hat{v}_{i}-\bar{V} \right )^{2}=S_{V}^{2}+o_{p} \left (1 \right )$, and $\sum_{i \in s} \tilde{w}_{i} \left ( s\right ) \hat{v}_{i}$ is asymptotically normal with mean $\bar{V}$ and variance $V_{p} \left( \breve{\bar{V}}_H \right )$ under Conditions (C4) and the central limit theorem for a Horvitz-Thompson estimator, the conclusion that the adjusted pseudo-empirical likelihood ratio statistic converges in distribution to $\chi _{1}^{2}$ follows immediately since $ \breve{r}_{JEL}\left ( \theta \right ) =\left \{ \sum_{i \in s} \tilde{w}_{i} \left (s \right ) \hat{v}_{i} -\bar{V} \right \}^{2} / V_{p}\left \{ \sum_{i \in s} \tilde{w}_{i} \left (s \right ) \hat{v}_{i} \right \} + o_{p}\left ( 1 \right )$.
\end{proof}


\end{document}